\newcommand{\ds}{\displaystyle}
\newcommand{\enabled}{\operatorname{enabled}}
\newcommand{\disabled}{\operatorname{disabled}}
\newcommand{\equals}{\stackrel{\mathrm{def}}{=}}
\newtheorem{theorem}{Theorem} 
\newtheorem{lemma}{Lemma} 
\newtheorem{corollary}{Corollary} 
\newtheorem{definition}{\bf Definition} 
\newenvironment{proof}[1][Proof]{\begin{trivlist}
\item[\hskip \labelsep {\bfseries #1}]}{\end{trivlist}}
\newtheorem{validity test}{Validity Test}
\newcommand{\upms}{\operatorname{upms}}
\newcommand{\FTP}{\operatorname{FTP}}
\newcommand{\FJP}{\operatorname{FJP}}
\newcommand{\step}{\widehat{\operatorname{step}}}
\newcommand{\minstep}{\operatorname{step}}
\newcommand{\SUMMSO}{\operatorname{SUM-MSO}}
\newcommand{\AUMMSO}{\operatorname{AUM-MSO}}
\newcommand{\SMMSO}{\operatorname{SM-MSO}}
\newcommand{\AMMSO}{\operatorname{AM-MSO}}
\newcommand{\CPU}{\operatorname{CPU}}
\newcommand{\fin}{\hfill{\small $\blacksquare$}}     %% fin de preuves
\begin{document}

\title{Scheduling Multi-Mode Real-Time Systems \\ upon Uniform Multiprocessor Platforms}

% ***************************** Liste des signataires ***********************************

\author{
	Patrick Meumeu Yomsi$^1$, Vincent Nelis$^2$ and Jo\"{e}l Goossens\\
	Universit\'{e} Libre de Bruxelles (U.L.B.)\\
         50 Avenue F. D. Roosevelt, C.P. 212\\ 
	1050 Brussels - Belgium \\
	\{patrick.meumeu.yomsi, vnelis, joel.goossens\}@ulb.ac.be \\ 
}

\maketitle
\thispagestyle{empty}

\footnotetext[1]{Postdoctoral researcher of the Belgian National Science Foundation (F.N.R.S.).}
\footnotetext[2]{Supported by the Belgian National Science Foundation (F.N.R.S.) under a F.R.I.A. grant.}
\addtocounter{footnote}{2}
                                                                                                         
%%%%%%%%%%%%%%%%%%%%%%%%%%%%%%%%%%%%%%%%%%%%%%%%%%%%%%%%%%%
%*************************************   				RŽsumŽ 		         	       ***************************************%
%%%%%%%%%%%%%%%%%%%%%%%%%%%%%%%%%%%%%%%%%%%%%%%%%%%%%%%%%%%

\begin{abstract}
In this paper, we address the scheduling problem of multi-mode real-time systems upon {\em uniform} multiprocessor platforms. We propose two transition protocols, specified together with their schedulability test, and provide the reader with two distinct upper bounds for the length of the transient phases during mode transitions, respectively for the cases where jobs priorities are known and unknown beforehand.
\end{abstract}

%%%%%%%%%%%%%%%%%%%%%%%%%%%%%%%%%%%%%%%%%%%%%%%%%%%%%%%%%%%%%%
%**********************************  				     Introduction 					   *************************************%
%%%%%%%%%%%%%%%%%%%%%%%%%%%%%%%%%%%%%%%%%%%%%%%%%%%%%%%%%%%%%%
\vspace{-0.3cm}
\section{Introduction}\label{Introduction}

Over the years, the sporadic constrained-deadline task model \cite{Baruah0} has proven remarkably useful for the modeling of recurring processes that occur in hard real-time computer application systems, where the failure to satisfy any constraint may have disastrous consequences. The problem of scheduling a single set of such tasks so that all the deadlines are met has been widely studied in the literature. However, many applications exhibit multiple behaviors issued from several operating modes (e.g., an initialization mode, an emergency mode, a fault recovery mode, etc.), where each mode is characterized by its own set of tasks. During the execution of such {\em multi-mode} hard real-time systems, switching from the current mode (called {\em old-mode}) to another mode (called {\em new-mode}) requires to substitute the current executing task set with the one of the {\em new-mode}. This substitution introduces a transient phase, where the tasks of the old-mode may be scheduled together with those of the new-mode, which could lead to an {\em overload} that can jeopardize the system schedulability. In a multi-mode real-time system, any Mode Change Request (MCR) divides the timeline into the alternance of two phases: ($i$) A steady phase before the MCR occurs, and ($ii$) a transient phase during the mode change.

In the presence of a MCR, a task $\tau_i$ can be classified according to its behavior during the transition. Thus, we distinguish between old-mode and new-mode tasks, see \cite{Pedro} for details. If $\tau_i$ belongs to the old-mode, then it may need completing its whole execution or it could be aborted at the occurrence of the MCR. The abortion is performed when it is feasible without loss of data consistency. In this paper, we will assume that every old-mode job {\em must} complete its execution when a MCR occurs which is actually the worst-case. If $\tau_i$ belongs to the new-mode, then it could either be a completely new task, that is, it does not belong to the old-mode but is active in the new one, or it could be active in both modes, but with different or exactly the same parameters in the new mode. In this latter  case, it is said to be {\em mode-independent}. Due to the difficulty to guaranty the schedulability of such tasks, we only consider systems without mode-independent tasks in this research. 

If a transition protocol allows the management of {\em mode-independent} tasks, then it is said to be {\em with periodicity}, otherwise it is said to be {\em without periodicity}. Moreover if it allows to schedule new-mode tasks only when all old-mode ones are completed, then it is said to be {\em synchronous}, otherwise it is said to be {\em asynchronous}. 

\paragraph{Related work.} Up to now, the scheduling of {\em multi-mode} hard real-time systems has been much studied, particularly upon {\em uniprocessor} platforms, where there is only one shared processor available upon which all the jobs must be executed \cite{Real, Andersson0}. Recently, extensive efforts have been performed towards extending the uniprocessor results to multiprocessor platforms, where there are several shared processors available upon which jobs may execute. Sounds results have been obtained in the particular case of {\em identical} multiprocessor platforms \cite{Lopez1, Nelis1}. 

\paragraph{This research.} In this paper, we study the scheduling of {\em multi-mode} hard real-time systems upon {\em uniform} multiprocessor platforms. We propose two protocols -- a synchronous and an asynchronous one -- for managing the mode transitions. Note that the results presented here also hold for identical multiprocessor platforms as they are a special case of uniform multiprocessor platforms, in which the computing capacities of all processors are equal. 

\paragraph{Paper organization.} The remainder of this paper is structured as follows. Section~\ref{Model} presents the platform and system model, as well as the scheduler and the mode transition specifications that are used throughout the paper. Section~\ref{Definitions and observations} provides the reader with some useful definitions and observations. Section~\ref{sec:protocols} introduces two protocols -- a synchronous and an asynchronous one -- for managing the mode transitions during the execution of a multi-mode hard real-time systems upon a uniform multiprocessor platform. Section~\ref{sec:tests} provides sufficient conditions under which a given system can be executed on a given platform without missing any deadline. Section~\ref{sec:proofs} elaborates these conditions for the specific cases where jobs priorities are known and unknown beforehand. Section~\ref{Experimental results} presents experimental results. Finally, Section~\ref{Conclusion and future work} concludes the paper and proposes future work.

\section{Model of computation}\label{Model}

\subsection{Multi-mode real-time specifications}

We consider a multi-mode real-time system to be a set of $x$ operating modes $M^1, M^2, \ldots, M^x$ such that the operating mode $M^k$ has to execute the task set $\tau^k \equals \{\tau^k_1, \tau^k_2, \cdots, \tau^k_{n_k}\}$ consisting of $n_k$ tasks by following the scheduler ${\cal S}^k$. Each task $\tau^k_i$ is modeled by a sporadic constrained-deadline task characterized by three parameters $(C^k_i, D^k_i, T^k_i)$ where $C^k_i$ is the Worst Case Execution Time (WCET), $D^k_i \le T^k_i$ is the relative deadline and $T^k_i$ is the minimum inter-arrival time between two consecutive releases of $\tau^k_i$. These parameters are given with the interpretation that, during the execution of mode $M^k$, task $\tau^k_i$ generates a certain number of successive jobs $\tau^k_{i,j}$ with execution requirement of at most $C^k_i$ each, arriving at time $a^k_{i,j}$ such that $a^k_{i,j+1} - a^k_{i,j} \ge T^k_i$ and that must completes within $[a^k_{i,j}, d^k_{i,j})$ where $d^k_{i,j} \equals a^k_{i,j} + D^k_i$. Job $\tau^k_{i,j}$ is said to be {\em active} if and only if $a^k_{i,j} \le t$ and is not completed yet. More precisely, an active task is said to be {\em running} at time $t$ if it is allocated to a processor and is being executed. Otherwise the active task is in the ready queue of the operating system and it is said to be {\em ready}. We denote by {\em active}$(\tau^k, t)$, {\em run}$(\tau^k, t)$ and {\em ready}$(\tau^k, t)$ the subsets of active, running and ready tasks of $\tau^k$ at time $t$, respectively. Except during the transition phases, we assume that the system always runs in only one mode and that all the tasks are independent, i.e., there is no communication, no precedence constraint and no shared resource (except for the processors) between tasks. 

At run-time, every task in the system must be {\em enabled} before it can generate jobs, otherwise it is {\em disabled}. When all the tasks in $\tau^k$ are enabled and all the tasks in other modes are {\em disabled}, the system is said to be running in mode $M^k$. As such disabling $\tau_i^k$ prevents future job arrivals from $\tau_i^k$. We denote the subsets of {\em enabled} and {\em disabled} tasks of $\tau^k$ at time $t$ by $\enabled(\tau^k, t)$ and $\disabled(\tau^k, t)$, respectively.

We denote a Mode Change Request (MCR) from a given mode, say $M^i$, to a new mode, say $M^j$, by MCR($j$) and we denote the invoking time of a MCR($j$) by $t_{\operatorname{MCR}(j)}$. At the occurrence of a MCR the active old-mode jobs are called {\em rem-jobs} and must complete their execution as it has been assumed in Section~\ref{Introduction}. Note that the results that we are presenting in this paper also hold when some rem-jobs can be aborted at $t_{\operatorname{MCR}(j)}$ since such jobs do not jeopardize the system schedulability. 

Because the rem-jobs may cause an overload if the tasks of $\tau^j$ are immediately enabled upon a MCR($j$), the transition protocols sometimes have to delay the enablement of new-mode tasks until it is safe to do so. Consequently, we denote by $\mathcal{D}^j_k(M^i)$ the {\em relative enablement deadline} of task $\tau_k^j$ during the transition from mode $M^i$ to mode $M^j$, with the interpretation that the transition protocol {\em must} ensure that $\tau_k^j$ is not enabled after time $t_{\operatorname{MCR}(j)} + \mathcal{D}^j_k(M^i)$. The system enters mode $M^j$ as soon as all the rem-jobs are completed and all the tasks of $\tau^j$ are enabled.

\subsection{Platform specifications}

We consider the scheduling of multi-mode hard real-time systems upon a uniform multiprocessor platform comprised of $m$ processors. We denote the $m$-processor uniform platform by $\pi = [s_1, s_2, \ldots, s_m]$ where each processor $\pi_i, \: i = 1, 2, \ldots, m$ is characterized by its own {\em speed} or {\em computing capacity} $s_i$, with the interpretation that a job that executes on a processor $\pi_i$ for $t$ time units completes ($s_i \cdot t$) units of execution. For reasons of clarity and readability, we assume without any loss of generality that $s_j \le s_{j+1} \:\: \forall j = 1, 2, \ldots, m-1$ and $s_1 > 0$.

\subsection{Scheduler specifications}
\label{sec:scheduler_specs}

We consider that every mode $M^k$ uses its own scheduler denoted by ${\cal S}^k$ which can be either {\em Fixed-Task-Priority ($\FTP$)} or {\em Fixed-Job-Priority ($\FJP$)}. $\FTP$ schedulers assign a priority to each task at system design-time (i.e., before the system execution) and then, at run-time, every released job inherits the priority of the task it belongs to. Conversely, $\FJP$ schedulers determine the priority of each job at run-time and different jobs of the same task may have different priorities\footnote{According to these interpretations, the $\FTP$ schedulers are a particular case of the $\FJP$ schedulers, in which the priorities of all jobs issued from the same task are all equal to a same value determined beforehand.}. However, for both $\FTP$ and $\FJP$ schedulers, the priority of each job may not change between its release time and its completion time. Additionally, these two types of schedulers are assumed to be \emph{work-conserving} according to the following definition.
 
\begin{definition}[Work-conserving schedulers]\label{wc_sched}
A scheduler is said to be work-conserving upon an $m$-processor uniform platform if and only if it satisfies the following conditions:
\begin{itemize}
\item A processor cannot be idle if there are active ready jobs.
\item If at some time instant there are fewer than $m$ active ready jobs (recall that $m$ denotes the number of processors in the uniform multiprocessor platform), then the active ready jobs are executed upon the fastest processors. That is, at any time instant $t$ if the $j$'th-slowest processor is idled by the scheduler, then the $k$'th-slowest processor ($\forall k < j$) is also idled at $t$. 
\item Higher priority jobs are executed upon faster processors.
\end{itemize}
\end{definition}

\begin{lemma}\label{lem:step}
Let $J$ be any set of synchronous jobs and $\pi$ be any uniform platform. Let ${\cal S}$ denote the schedule of $J$ produced on $\pi$ by any work-conserving $\FJP$ scheduler. If $\minstep_j$ denotes the smallest instant in ${\cal S}$ where at least $j$ processors are idle, then it holds $\forall j=2, \ldots, m$ that 
\begin{equation}\label{equ:step}
\minstep_j \geq \minstep_{j-1}
\end{equation}
\end{lemma}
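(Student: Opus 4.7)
The plan is to prove the inequality more or less directly from the definition of $\minstep_j$. The main observation is that the condition ``at least $j$ processors are idle at $t$'' is strictly stronger than ``at least $j-1$ processors are idle at $t$'', so the set of instants satisfying the former is contained in the set satisfying the latter, and taking the infimum reverses inclusion in the desired way.

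First, I would argue that $\minstep_j$ is well-defined for every $j \in \{1, \ldots, m\}$. Since $J$ is a finite set of synchronous jobs and each job has finite (worst-case) execution requirement, the schedule ${\cal S}$ produced by any work-conserving scheduler eventually finishes all of $J$; from that instant onward all $m$ processors are idle, so in particular there exist instants at which at least $j$ processors are simultaneously idle. Hence the infimum defining $\minstep_j$ is attained and is a valid instant of ${\cal S}$.

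Next I would establish the set inclusion
\[
\{\, t \in {\cal S} \mid \text{at least } j \text{ processors are idle at } t\,\} \;\subseteq\; \{\, t \in {\cal S} \mid \text{at least } j{-}1 \text{ processors are idle at } t\,\},
\]
which is immediate: if $j$ of the $m$ processors are idle at $t$, then a fortiori $j-1$ of them are idle at $t$. Taking the infimum on both sides yields $\minstep_{j-1} \leq \minstep_j$, which is exactly~\eqref{equ:step}.

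There is not really any technical obstacle here; the only subtlety worth flagging is that the claim relies implicitly on the work-conserving hypothesis only to the extent that $\minstep_j$ is actually realised in the schedule (so that ``smallest instant'' is meaningful and not merely an infimum of an empty set). For synchronous jobs under a work-conserving $\FJP$ scheduler this is guaranteed by the previous paragraph, so the set-inclusion argument suffices.
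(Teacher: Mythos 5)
Your proof is correct, but it follows a genuinely different route from the paper's. You argue purely from the definition: the set of instants at which at least $j$ processors are idle is contained in the set of instants at which at least $j-1$ processors are idle, so the smallest element of the former cannot precede that of the latter. This is a static, set-theoretic argument; the only role the work-conserving hypothesis plays for you is to guarantee that the schedule terminates, so that each $\minstep_j$ exists and the infimum is attained (a point you rightly flag). The paper instead gives a dynamic argument based on the second clause of Definition~\ref{wc_sched}: when a job completes on or migrates from processor $\pi_j$, the job running on the slower processor $\pi_{j-1}$ migrates up to $\pi_j$, so idleness propagates from slower to faster processors and the instants $\minstep_j$ form the staircase of Figure~\ref{fig:staircase1}. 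The trade-off is this: your argument is shorter, more elementary, and in fact valid for \emph{any} scheduler (work-conserving or not) for which the $\minstep_j$ are defined, which makes it the cleaner proof of the inequality as literally stated. The paper's argument proves the inequality while simultaneously establishing the structural picture that the rest of the paper leans on --- namely that the idle processors at any time are exactly the $j$ slowest ones $\pi_1,\ldots,\pi_j$ and that once a processor falls idle it stays idle --- information that your set-inclusion argument does not by itself deliver, and which is implicitly reused in Definition~\ref{def:tij}, Theorem~\ref{theo3} and the validity test of Algorithm~\ref{Validity test}. Both proofs are sound for the stated claim.
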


\begin{proof}
According to Definition~\ref{wc_sched}, when a job completes on (or migrates from) any processor $\pi_j$ with $j \in \left[ 2, m \right]$, the job (if any) executing on processor $\pi_{j-1}$ migrates to processor $\pi_j$. This directly leads to $\minstep_j \geq \minstep_{j-1}$ (see Figure~\ref{fig:staircase1}). \fin
\end{proof}

\begin{figure}[!h]
\centering
\includegraphics[scale=0.29]{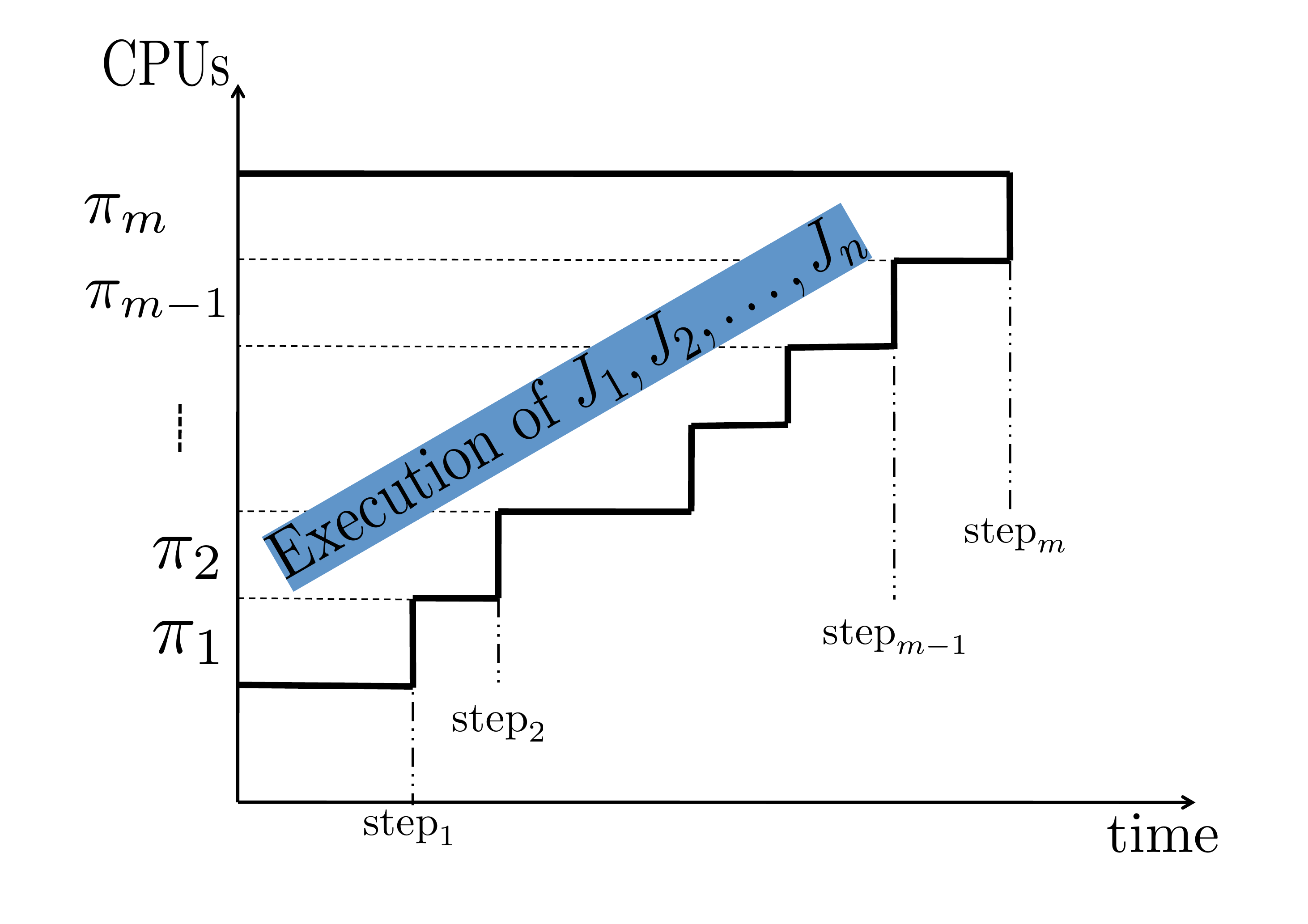}
\caption{Staircase defined by the $\minstep_j$}
\label{fig:staircase1}
\end{figure}

Hereafter, we assume without any loss of generality that every task set $\tau^k$ is schedulable in mode $M^k$ by the scheduler ${\cal S}^k$ on the $m$-processor uniform platform $\pi = [s_1, s_2, \ldots, s_m]$. This assumption allows us to only focus on the schedulability analysis of the system during the {\em transient phases} corresponding to mode transitions, rather than on the execution during the modes themselves. 

\subsection{Mode transition specifications}
\label{sec:modetrans_specs}

During the execution of a multi-mode real-time system in a given mode, say $M^i$, we consider that a Mode Change Request MCR($j$) to the new mode $M^j$ can be initiated by any task of $\tau^i$ or by the system itself, whenever it detects a change in the environment or in its internal state. At that time the system entrusts the scheduling decisions to the {\em transition protocol}. Such a protocol {\em immediately} disables all the old-mode tasks, which thus prevents the system of new jobs arrival from these tasks. The goal of the transition protocol is to complete all the rem-jobs and to enable all the new-mode tasks while meeting all the job and enablement deadlines. Once again we recall that we do not consider multi-mode real-time systems with mode-independent tasks. This study will be performed under the following assumptions during mode transitions: ($i$) {\em Job migration is permitted} (with no penalty). That is, a job that has been preempted on a particular processor may resume execution on the same or a different processor; ($ii$) {\em Job parallelism is forbidden}. That is, a job may execute on at most one processor at any given instant in time.

\section{Definitions and observations}
\label{Definitions and observations}

Before going any further in this paper, let us provide the reader with some useful definitions and observations. 

\begin{definition}[A valid protocol] 
A transition protocol is said to be {\em valid} for a given multi-mode real-time system if it always meets all the job and enablement deadlines during the transition from any mode of the system to any other one.  
\end{definition}

\begin{definition}[$S_{\pi}$]
Let $\pi = [s_1, s_2, \ldots, s_m]$ denote an $m$-processor uniform platform. We define $S_{\pi}$ as the sum of all the processor speeds, i.e., $S_{\pi} \equals \sum_{i=1}^{m} s_i$.
\end{definition}

\begin{definition}[Predictability]
Let $A$ denote a scheduler, and let $J = \{J_1, J_2, J_3, \ldots\}$ be a potentially infinite set of jobs, where each job $J_i = (a_i, c_i, d_i)$ is characterized by an arrival time $a_i$, a computation requirement $c_i$ and an absolute deadline $d_i$. Let $g_i$ (resp.\@\:$f_i$) denote the time at which job $J_i$ starts (resp.\@\:completes) its execution when $J$ is scheduled by $A$. Now consider any set $J'= \{J'_1, J'_2, J'_3, \ldots\}$ of jobs obtained from $J$ as follows. Job $J'_i$ is characterized by the arrival time $a_i$, a computation requirement $c'_i \le c_i$ and the absolute deadline $d_i$. Let $g'_i$ (resp.\@\:$f'_i$) denote the time at which job $J'_i$ starts (resp.\@\:completes) its execution when $J'$ is scheduled by $A$. Algorithm $A$ is said to be {\em predictable} if and only if for any set of jobs $J$ and any such $J'$ obtained from $J$, it is the case that $g'_i \le g_i$ and $f'_i \le f_i \:\: \forall i$. 
\end{definition}

\begin{lemma}[See \cite{CucuJo4}]\label{predict} 
Any work-conserving and FJP algorithm is predictable on uniform multiprocessor platforms. 
\end{lemma}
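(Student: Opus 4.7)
The plan is to prove the lemma by reducing, through a standard composition argument, to the case in which exactly one job has a strictly smaller execution requirement; the general statement then follows by iterating this atomic case. Accordingly, I would fix a job $J_k$ and assume that $c'_k < c_k$ while $c'_i = c_i$ for every $i \ne k$, and then show $g'_i \le g_i$ and $f'_i \le f_i$ for all $i$.

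Once this reduction is in place, the argument splits into three priority classes. First, because the scheduler is $\FJP$ and work-conserving, the scheduling decisions concerning any job of priority strictly higher than $J_k$ depend neither on $J_k$ nor on any job below it, so $g'_i = g_i$ and $f'_i = f_i$ for every such $J_i$. Second, $J_k$ itself is released at the same time and assigned to the same processors in both schedules until it has completed $c'_k$ units of work, which immediately yields $g'_k = g_k$ and $f'_k \le f_k$. Third, and this is where the real work lies, one has to handle the jobs of priority strictly lower than $J_k$.

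For those jobs I would compare the two schedules through the processor-speed profile they receive. At each instant $t$, define $\sigma_\ell(t)$ (resp.\@\:$\sigma'_\ell(t)$) to be the speed of the processor allocated in ${\cal S}$ (resp.\@\:${\cal S}'$) to the $\ell$-th highest priority active ready job, with the convention $\sigma_\ell(t) = 0$ when no such job exists. By Definition~\ref{wc_sched}, both sequences are non-increasing in $\ell$. The invariant to establish, by induction on $\ell$ and by an event-driven case analysis in $t$, is $\sigma'_\ell(t) \ge \sigma_\ell(t)$ at every instant. Once this is in hand, integrating it against time shows that every lower-priority job has accumulated at least as much execution in ${\cal S}'$ as in ${\cal S}$ at every instant, which in turn gives $g'_i \le g_i$ and $f'_i \le f_i$ for every such $i$.

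The main obstacle is establishing this speed-dominance invariant across the event at which $J_k$ completes early in ${\cal S}'$. At that moment the processor previously assigned to $J_k$ becomes available, and the ``higher-priority-on-faster-processors'' clause of Definition~\ref{wc_sched} triggers a cascading re-allocation among the lower-priority jobs that has no analogue on identical platforms. The delicate step is to verify, event by event (arrivals, completions and forced migrations), that no lower-priority job is ever downgraded to a slower processor in ${\cal S}'$ than the one it occupied in ${\cal S}$ at the same time: the slack created by the shortening of $J_k$ must be shown to propagate monotonically downward along the priority order, and never upward or sideways. This monotone propagation, combined with the fact that higher priority jobs are unaffected, is what ultimately closes the induction and yields predictability.
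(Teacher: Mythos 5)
The paper offers no proof of this lemma at all: it is imported wholesale from \cite{CucuJo4}, so there is nothing internal to compare your argument against, and I can only judge it on its merits. Your overall strategy is sound and is essentially the one used in that reference: reduce to the change of a single job's requirement, observe that jobs of priority higher than $J_k$ (and $J_k$ itself, up to its earlier completion) are untouched, and then argue that every lower-priority job can only be promoted, never demoted. The one genuine weak point is the form of your invariant. As stated, $\sigma'_\ell(t) \ge \sigma_\ell(t)$ compares the $\ell$-th ranked \emph{active} job in ${\cal S}'$ with the $\ell$-th ranked active job in ${\cal S}$, and once completions start to diverge these need not be the same job; integrating a rank-indexed inequality over time does not directly yield the per-job cumulative-execution dominance you need to conclude $f'_i \le f_i$. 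Moreover, the ``event-by-event cascade'' you worry about is largely an artifact of this formulation: under Definition~\ref{wc_sched} the processor assigned to a job at instant $t$ is determined \emph{solely} by the number of active jobs of higher priority at $t$, since the $\ell$-th highest-priority active job always occupies the $\ell$-th fastest processor. The clean way to close the argument is therefore an induction on the priority index $i$: assuming $J_1,\dots,J_{i-1}$ each complete no later in ${\cal S}'$, their active intervals in ${\cal S}'$ are contained in those in ${\cal S}$ (same arrivals, no-later completions), so at every instant the count of active higher-priority jobs is no larger in ${\cal S}'$; hence $J_i$ is allocated a processor at least as fast whenever it is active in both schedules, starts no later, and, having $c'_i \le c_i$, finishes no later. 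This replaces the delicate monotone-propagation verification by a one-line counting argument and, crucially, breaks the circularity between completion times and processor assignments that your event-driven analysis would otherwise have to untangle.
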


\begin{lemma}\label{rem-deadlines}
When a MCR($j$) occurs at time $t_{\operatorname{MCR}(j)}$ while the system is running in mode $M^i$, every rem-job issued from the tasks of $\tau^i$ meets its deadline when scheduled by $S^i$ upon an $m$-processor uniform platform.
\end{lemma}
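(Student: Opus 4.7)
The plan is to reduce the lemma to a direct application of predictability (Lemma~\ref{predict}). I would compare two schedules of $\tau^i$ under $\mathcal{S}^i$ on platform $\pi$: the hypothetical schedule $\mathcal{S}$ in which the system remains in mode $M^i$ indefinitely (so jobs keep being released from $\tau^i$ as if no MCR had been issued), and the schedule $\mathcal{S}'$ corresponding to the post-MCR scenario, in which no further jobs are released from $\tau^i$ after $t_{\operatorname{MCR}(j)}$. Up to $t_{\operatorname{MCR}(j)}$ the two schedules coincide; the only difference beyond that instant is that $\mathcal{S}'$ contains no arrivals from $\tau^i$ whereas $\mathcal{S}$ does.

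The key observation is that $\mathcal{S}'$ can be viewed as the schedule of a job set $J'$ obtained from the job set $J$ of $\mathcal{S}$ by keeping arrival times and deadlines unchanged and replacing the computation requirement $c_k$ of every job of $\tau^i$ released at or after $t_{\operatorname{MCR}(j)}$ by $c'_k = 0$ (while leaving $c'_k = c_k$ for every rem-job and for every job released strictly before $t_{\operatorname{MCR}(j)}$). Thus $c'_k \le c_k$ for every job, placing $(J, J')$ exactly in the setting covered by the definition of predictability.

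Now I would invoke Lemma~\ref{predict}: since $\mathcal{S}^i$ is either $\FTP$ or $\FJP$ and is work-conserving by the hypotheses of Section~\ref{sec:scheduler_specs}, and since $\FTP$ is a particular case of $\FJP$, Lemma~\ref{predict} applies and yields $f'_k \le f_k$ for every job $k$. Applying this inequality to any rem-job and combining it with $f_k \le d_k$ in $\mathcal{S}$ --- which holds by the standing assumption (Section~\ref{sec:scheduler_specs}) that $\tau^i$ is schedulable in mode $M^i$ by $\mathcal{S}^i$ on $\pi$ --- gives $f'_k \le d_k$, i.e., every rem-job meets its deadline in $\mathcal{S}'$, which is precisely the transient-phase schedule contemplated in the statement.

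The only mildly delicate point is justifying that ``removing'' the post-MCR arrivals of $\tau^i$ really corresponds to the $c'_k=0$ reduction covered by predictability; this is just the remark that, under a work-conserving scheduler, a job with zero computation requirement completes instantaneously at its arrival time and occupies no processor time, so it is indistinguishable from its absence. Apart from this small formal verification, the argument is a one-line consequence of Lemma~\ref{predict}, and I do not anticipate any real obstacle.
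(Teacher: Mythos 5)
Your proposal is correct and follows essentially the same route as the paper: the paper's proof also models the disabling of old-mode tasks as setting the execution requirement of all their future jobs to zero and then invokes the predictability of $S^i$ (Lemma~\ref{predict}) together with the standing schedulability assumption for mode $M^i$. Your write-up merely makes explicit the $(J,J')$ pairing and the zero-requirement-equals-absence remark that the paper leaves implicit.
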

\begin{proof}
From our assumptions, we know that the set of tasks $\tau^i$ of the mode $M^i$ is schedulable by $S^i$ upon an $m$-processor uniform platform, and from Lemma \ref{predict} we know that $S^i$ is predictable. When the MCR($j$) occurs at time $t_{\operatorname{MCR}(j)}$, all the tasks of $\tau^i$ are disabled. Disabling these tasks is equivalent to set the execution time of all their future jobs to zero, and since $S^i$ is predictable the deadline of every rem-job is still met. \fin
\end{proof}

\begin{lemma}\label{worst-case scenario}
At the occurrence of a MCR($j$) for the transition to an operating mode, say from mode $M^i$ to mode $M^j$, the {\em worst-case scenario} (in term of job completion time) is the one where all the rem-jobs issued from the tasks of $\tau^i$ are released simultaneously upon MCR($j$) with a computation requirement equals to their WCET each. 
\end{lemma}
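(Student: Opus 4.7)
The plan is to establish the statement in two steps corresponding to its two components: (i) that each rem-job's computation requirement equals its WCET, and (ii) that all rem-jobs are released simultaneously at $t_{\operatorname{MCR}(j)}$.

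For step (i), I would apply Lemma \ref{predict} directly. Since $S^i$ is a work-conserving $\FTP$ or $\FJP$ scheduler, it is predictable. By the definition of predictability, reducing a job's computation requirement cannot delay its completion; equivalently, inflating each rem-job's remaining computation requirement up to its full WCET (which amounts to assuming that no rem-job has executed any time prior to $t_{\operatorname{MCR}(j)}$) can only push completion times later. Thus, among all scenarios with a given set of release times for the rem-jobs, the worst case has every rem-job with computation requirement equal to its WCET.

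For step (ii), I would use a release-shifting argument. Starting from any scenario $\sigma$ in which some rem-job $J_k$ (with full WCET by step (i)) is released at $a_k < t_{\operatorname{MCR}(j)}$, I would construct a scenario $\sigma'$ identical to $\sigma$ except that $J_k$'s release is moved to $t_{\operatorname{MCR}(j)}$ with full WCET. Because old-mode tasks are disabled immediately upon MCR($j$) and no further old-mode jobs arrive thereafter, the workload pending at $t_{\operatorname{MCR}(j)}$ in $\sigma'$ is no smaller than in $\sigma$, and the execution history prior to $t_{\operatorname{MCR}(j)}$ plays no role in the scheduling decisions made afterwards apart from its contribution to the remaining work. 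Combining this with the work-conserving property of $S^i$ (Definition \ref{wc_sched}) and predictability, I would conclude that the latest rem-job completion time in $\sigma'$ is no earlier than in $\sigma$. Iterating this shift across every rem-job with $a_k < t_{\operatorname{MCR}(j)}$ yields the fully synchronous scenario described in the statement as the worst case.

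The main obstacle lies in step (ii): whereas predictability cleanly handles the computation-requirement part, shifting the release time of a $\FJP$ job may alter its priority (for instance its deadline under EDF), so Lemma \ref{predict} cannot be invoked verbatim for the release-time shift. The argument therefore requires either a scheduler-agnostic framing based on the total pending workload at $t_{\operatorname{MCR}(j)}$ and work-conservation, or a careful verification that any priority reassignment induced by deferring a release to $t_{\operatorname{MCR}(j)}$ cannot bring any completion forward.
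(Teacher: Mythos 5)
Your proof is correct in its essentials and rests on the same two pillars as the paper's own (extremely terse) proof, namely predictability (Lemma~\ref{predict}) and the work-conserving property; your step~(i) is exactly the paper's argument. Where you diverge is step~(ii): the paper does not perform any release-shifting, and in fact none is needed. By definition the rem-jobs are precisely the jobs of $\tau^i$ that are \emph{active} at $t_{\operatorname{MCR}(j)}$, i.e.\ already released and not yet completed, and their priorities were fixed at their actual release instants and cannot change thereafter (Section~\ref{sec:scheduler_specs}). Hence the post-MCR schedule produced by the work-conserving scheduler $S^i$ depends only on the rem-jobs' remaining execution requirements and their (already assigned) priorities; it is \emph{identical} to the schedule of a synchronous job set released at $t_{\operatorname{MCR}(j)}$ with those remaining requirements. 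The phrase ``released simultaneously'' in the lemma is therefore a modeling device, not a transformation whose effect on the schedule must be bounded, and the obstacle you flag --- that deferring a release could reassign an $\FJP$ priority, e.g.\ under EDF --- dissolves: no release is actually moved, and in any case the upper bounds of Section~\ref{sec:proofsFJP} are taken over \emph{all} priority assignments, which absorbs any residual concern. What remains is exactly your step~(i): inflating each remaining requirement to the full WCET can only postpone completions, by predictability. So your argument buys a more explicit justification than the paper offers, at the cost of introducing a release-shifting step that creates an apparent (but avoidable) difficulty; recasting step~(ii) as the observation above closes it cleanly.
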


\begin{proof}
The property is straightforward from Lemma~\ref{predict} and the fact that we only consider work-conserving schedulers in each operating mode. These ones are predictable. \fin
\end{proof}

\begin{definition}[Makespan]
Let $J = \{J_1, J_2, \ldots, J_n\}$ be a set of $n$ jobs, all released at time $t = 0$, with computation requirements $C_1, C_2, \ldots, C_n$, respectively. Let $\pi=~[s_1, s_2, \ldots, s_m]$ denote an $m$-processor uniform platform and $A$ be any scheduling algorithm. If $S$ denotes the schedule of $J$ produced by $A$ upon $\pi$ then the {\em makespan} is the earliest instant in $S$ at which all jobs of $J$ are completed.
\end{definition}

Very often, and especially when job priorities are unknown beforehand, determining the makespan of a set of jobs is a very challenging problem in scheduling theory. In the literature, extensive efforts have been made for the {\em minimum makespan scheduling problem} -- that is, to find a priority assignment for all the jobs of $J$ such that the makespan is minimized upon a given $m$-processor platform. Following the naming scheme introduced by Graham et al.\cite{GLLR1}, this problem is referred to as $P | | C_{\mbox{max}}$. However in this paper, we will focus on the {\em maximum makespan} that could be produced by a given set of $n$ jobs (all release at a same time), scheduled according to any work-conserving scheduler upon an $m$-processor uniform platform. 

For a given set of jobs, an intuitive idea for maximizing the makespan upon an $m$-processor uniform platform would be to execute, at any time, the longest ready job upon the slowest available processor, i.e., the shorter the computation requirement of a job, the higher its priority. However, we can show that this intuitive idea is {\em erroneous}, unfortunately\footnote{On the $2$-processor identical platform $\pi = [1, 1]$, the set of jobs $J = \{J_1, J_2, J_3, J_4\}$, all released at time $t = 0$ and such that $C_1= 2$, $C_2 = 3$, $C_3 = 5$ and $C_4 = 7$ provides a makespan of $10$ when $J_1 > J_2 > J_3 > J_4$, whereas the priority assignment $J_3 > J_1 > J_2 > J_4$ leads to a makespan of~$12$.}. An $\FJP$ assignment leading to the maximum makespan remains an open question.

Another intuitive idea would be to naively extend one of the results proposed in~\cite{Nelis1} for an $m$-processor {\em identical} platform, i.e.,
\begin{lemma}\label{upms} {(\bf Lemma $5$ in \cite{Nelis1})} 
Let $J = \{J_1, J_2, \ldots, J_n\}$ be a set of $n$ jobs, all released at time $t = 0$, with computation requirements $C_1, C_2, \ldots, C_n$ respectively, such that $C_1 \le C_2 \le \cdots \le C_n$. Suppose that these jobs are scheduled upon $m$ {\em identical} processors by a work-conserving scheduler $S$. Then, whatever the priority assignment of jobs, an upper bound on the makespan is given by
\begin{equation}
\label{equ:oldresult}
\footnotesize
\upms(J,m) \equals
  \left\{
          \begin{array}{ll}
	 C_n  & \quad \mathrm{if}\quad m \ge n \\ \\
          \ds\frac{1}{m} \ds\sum_{i=1}^{n}C_i + \left(1 - \ds\frac{1}{m}\right) \cdot C_n \quad & \quad \mathrm{otherwise}         	
          \end{array}
 \right.
\end{equation}
\end{lemma}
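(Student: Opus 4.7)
The plan is to split the argument on whether there are enough processors to host every job in parallel. For the easy case $m \geq n$, I would observe that on an identical platform a work-conserving scheduler can always assign each of the (at most $n$) active jobs to its own processor, so every job $J_i$ executes without interruption starting at time $0$. Hence $J_i$ completes at time $C_i$, and the makespan is exactly $\max_i C_i = C_n$.

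For the interesting case $m < n$, the plan is to identify the last job to finish and charge all the processor idleness to its execution interval. First I would let $M$ denote the makespan, let $J_k$ be a job that completes at $M$, and note $C_k \leq C_n$. The central claim is that at every instant $t \in [0,M]$, if at least one processor is idle at $t$, then $J_k$ must be running at $t$. The justification uses that every job is released at time $0$, so $J_k$ is ready throughout $[0,M)$; if $J_k$ were not running while some processor were idle, the work-conserving rule of Definition~\ref{wc_sched} would force the scheduler to assign $J_k$ to that idle processor, a contradiction. Therefore all idle slots lie inside the time interval during which $J_k$ executes, and at most $m-1$ processors can be idle at any such moment (since one is occupied by $J_k$). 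Summing, the total processor-idle time $I$ in $[0,M]$ satisfies
\begin{equation*}
I \;\leq\; (m-1)\,C_k \;\leq\; (m-1)\,C_n.
\end{equation*}

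To conclude, I would use the conservation of work on an identical $m$-processor platform: the total available service in $[0,M]$ is $mM$, and it equals the total work $\sum_{i=1}^n C_i$ processed plus the total idle time $I$. Substituting the bound on $I$ gives
\begin{equation*}
mM \;=\; \sum_{i=1}^{n} C_i + I \;\leq\; \sum_{i=1}^{n} C_i + (m-1)\,C_n,
\end{equation*}
and dividing by $m$ yields exactly the claimed bound $\frac{1}{m}\sum_{i=1}^{n}C_i + (1-\frac{1}{m})C_n$.

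The main obstacle is justifying rigorously the claim that idleness can only coexist with $J_k$ running. This requires invoking work-conservation at the right granularity (at every instant, not just in aggregate), and observing that, crucially, every job is released synchronously at time $0$ so $J_k$ is continuously ready until $M$. Once that observation is in place, the rest is a one-line accounting argument; the bound is tight in the worst case where $J_k$ is the longest job and the $m-1$ other processors finish all remaining work just as $J_k$ starts.
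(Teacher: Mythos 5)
Your proof is correct. Note that the paper itself does not prove this lemma --- it is imported verbatim as Lemma~5 of \cite{Nelis1} --- so there is no in-paper argument to compare against; but your argument is the standard Graham-style charging argument, and every step holds: for $m \ge n$ work-conservation forces each synchronously released job onto its own processor, and for $m < n$ the observation that the last-finishing job $J_k$ is continuously ready on $[0,M)$ correctly confines all idleness to the (measure-$C_k$) set of instants where $J_k$ runs, with at most $m-1$ processors idle at each such instant, after which the accounting identity $mM = \sum_i C_i + I$ gives the bound. This is almost certainly the same route taken in \cite{Nelis1}, and it is also the argument that the present paper's Theorem~\ref{latest tidle} generalizes (replacing the per-instant idleness count by the lower bounds $\hat{L}_j$ on the step instants to handle unequal speeds).
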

Naively extending Expression~\ref{equ:oldresult} leads to
\begin{equation}
\label{equ:naiveresult}
\footnotesize
\upms_0(J, \pi) \equals
  \left\{
          \begin{array}{ll}
	   C_n \slash s_{m-n+1}  & \:  \mathrm{if}\: m \ge n \\ \\
          \ds\frac{1}{S_\pi} \ds\sum_{i=1}^{n}C_i + \left(\ds\frac{1}{s_m} - \ds\frac{1}{S_\pi}\right) \cdot C_n  & \mathrm{otherwise}        
          \end{array}
 \right.
\end{equation}
and we can show that the intuitive idea used to derive this bound does not extend to uniform platforms, unfortunately\footnote{On the $3$-processor platform $\pi = [1, 2, 100]$, the set of jobs $J = \{J_1, J_2, J_3\}$, all released at time $t = 0$ and such that $C_1= 10, C_2 = 10$ and $C_3 = 100$ provides a maximum makespan of $1.1968$, reached when $J_1 > J_2 > J_3$. However, Expression~(\ref{equ:naiveresult}) provides $\upms_0(J, \pi) = 1.194175 <  1.1968$.}.

Now we are aware that neither the ``Shortest-Job-First'' policy nor $\upms_0(J, \pi)$ lead to the maximum makespan. In Section~\ref{sec:protocols}, we present the protocols $\SUMMSO$ and $\AUMMSO$ that are generalizations to uniform multiprocessor platforms of the protocols $\SMMSO$ and $\AMMSO$ respectively, defined for identical multiprocessor platforms \cite{Nelis1}. Then, we provide in Sections~\ref{sec:proofsFTP} and~\ref{sec:proofsFJP} two distinct {\em upper} bounds on the maximum makespan, for the cases where jobs priorities are known and unknown beforehand, i.e., $\FTP$ and $\FJP$ schedulers, respectively.

\section{Protocols $\SUMMSO$ and $\AUMMSO$}
\label{sec:protocols}

\paragraph{$\SUMMSO$.} The protocol $\SUMMSO$ which stands for ``Synchronous Uniform Multiprocessor Minimum Single Offset'' is an extension to {\em uniform} multiprocessor platforms of the protocol $\SMMSO$ defined for identical multiprocessor platforms \cite{Nelis1} to manage the rem-jobs during transition between any two operating modes. The main idea of $\SUMMSO$ is the following: upon a MCR($j$), every task of the current mode (say $M^i$) is disabled and the rem-jobs continue to be scheduled by $S^i$ upon the $m$ processors. \emph{When all of them are completed}, all the new mode tasks (i.e., the tasks of $\tau^j$) are simultaneously enabled. We refer the interested reader to~\cite{Nelis1} for a pseudo-code of this protocol.

\paragraph{$\AUMMSO$.} The protocol $\AUMMSO$ which stands for ``Asynchronous Uniform Multiprocessor Minimum Single Offset'' is an extension to uniform multiprocessor platforms of the protocol $\AMMSO$ defined for identical multiprocessor platforms \cite{Nelis1} to manage the rem-jobs during transition between any two operating modes. The main idea is the following: upon a MCR($j$), reduce the enablement delay applied to new-mode tasks by enabling them as soon as possible. Here, rem-jobs and new-mode tasks can be scheduled simultaneously during the transition phases according to the scheduler $S^{trans}$ with the following rules: ($i$) the priorities of the rem-jobs are assigned according to the old-mode scheduler; ($ii$) the priorities of the new-mode tasks are assigned according to the new-mode scheduler, and ($iii$) every rem-job always has a higher priority than every new-mode task.

Upon a MCR($j$), all the old-mode tasks, say of mode $M^i$, are disabled and the rem-jobs continue to be scheduled by $S^i$. Whenever the lowest priority rem-job migrates to a faster processor due to the completion of a higher priority one, (say at time $t$), some processors become available and thus the protocol $\AUMMSO$ immediately enables some new-mode tasks; contrary to the protocol $\SUMMSO$ which waits for the completion of {\em all} the rem-jobs. In order to select the new-mode tasks to enable at time $t$, $\AUMMSO$ uses the following heuristic: it considers every disabled new-mode task in increasing order of their enablement deadline and it enables those which can be scheduled by $S^j$ upon the current available CPUs (i.e., the CPUs which are not running a rem-job and which are therefore available for executing some new-mode tasks). 

Let $\pi = [s_1, s_2, \ldots, s_m]$ be an $m$-processor uniform platform with processor capacities such that $s_j \le s_{j+1}$ for all $j, \: 1 \le j \le m-1$. Let $S$ be any work-conserving $\FJP$ scheduler. Let $\tau^\ell$ be a set of tasks. 
We denote by $\pi^S(\tau^\ell)$ the subset of processors running a job of $\tau^\ell$ when $\tau^\ell$ is scheduled by $S$ upon platform $\pi$. 
We denote by $\CPU(\pi, S, \tau^\ell)$ the binary function defined by:
\[
\CPU(\pi, S, \tau^\ell) \equals
  \left\{
          \begin{array}{ll}
	   1 & \quad \mathrm{if} \:\: \tau^\ell \:\: \mbox{is schedulable by} \:\: S \:\: \mbox{upon} \:\: \pi \\
            0 & \quad \mathrm{otherwise}
         \end{array}
 \right.
 \]
This function is useful as we must always guarantee that all the deadlines are met for all the jobs in the system. To the best of our knowledge, there is no {\em efficient necessary} and {\em sufficient} schedulability test for most multiprocessor schedulers upon uniform platforms. However sufficient schedulability tests exist for scheduler such as EDF and DM \cite{BaJo08_1, BaJo08_2}. Algorithm \ref{AUM-MSO protocol} gives the pseudo-code of the $\AUMMSO$ protocol in a more formal way.

\begin{algorithm}
\footnotesize
\KwIn{$M^i$: old-mode; $M^j$: new-mode}
\KwOut{a safe release of the new-mode tasks}
\Begin{
	Assign priorities to jobs according to $S^{trans}$ \;
	Sort $\disabled(\tau^j, t)$ by increasing enablement deadlines \;
	$\pi^{\operatorname{old}} \leftarrow \pi$ \; 
	$\pi^{\operatorname{new}} \leftarrow \emptyset$ \; 
	
	\While{($\pi^{\operatorname{old}} \neq \emptyset$)}{
		At job (say $J_k$) completion time $t$, a subset (say $\pi^{\operatorname{avl}} = \{\pi_{\operatorname{s}}, \ldots, \pi_{\operatorname{f}}\}$) of ($f-s+1$) slowest processors may become available:\\
	
		\If{$(J_k \in \tau^i$ {\bf and} $ready(\tau^i, t) = \emptyset$)}{
			$\pi^{\operatorname{old}} \leftarrow \pi^{\operatorname{old}} \backslash \pi^{\operatorname{avl}}$ \;
		  	$\pi^{\operatorname{new}} \leftarrow \pi^{\operatorname{new}} \cup \pi^{\operatorname{avl}}$ \; 
			\ForAll{$\tau^j_r \in \disabled(\tau^j, t)$}{
				$\tau^{\operatorname{temp}} \leftarrow \enabled(\tau^j, t) \cup \{\tau^j_r\}$ \;
				%$\pi^{\operatorname{new}} \leftarrow \pi \backslash \pi^{S^i}(\tau^i)$ \;
				\If{$(\CPU(\pi^{\operatorname{new}}, S^{j}, \tau^{\operatorname{temp}}) \neq 0)$}{
					enable $\tau^j_r$ \;
				}	
			}
		}
	}
	\lIf{($active(\tau^i, t) = \emptyset$)}{
		enter mode $M^j$ \;
	}	
}
\caption{$\AUMMSO$ protocol}
\label{AUM-MSO protocol}
\end{algorithm}

\section{Validity tests for protocols $\SUMMSO$ and $\AUMMSO$}
\label{sec:tests}
%Based on this protocol, we provide the reader with two bounds at which it is safe to release the new-mode tasks: ($i$) an {\em exact makespan} $exms(J, \pi)$ in the case where the priorities of the rem-jobs are known beforehand. That is, any $\FTP$ scheduler is used to scheduled tasks in each operating mode; ($ii$) an {\em upper bound on the makespan} $\upms(J, \pi)$ in the case where the priorities of the rem-jobs may be unknown beforehand. That is, any $\FJP$ scheduler is used to scheduled tasks in each operating mode. Note that an $\FJP$ assignment leading to the maximum makespan remains an open question.

In the previous section we have defined the transition protocols $\SUMMSO$ and $\AUMMSO$. Now, we need to establish a {\em validity test} -- that is, a condition based on the tasks and platforms characteristics that indicates {\em a priori} whether the given system will always comply its expected requirement during every mode change. To do so, we proved in Lemma \ref{rem-deadlines} that disabling the old-mode tasks upon a MCR does not jeopardize the schedulability analysis of the rem-jobs, when they continue to be scheduled by using the old-mode scheduler specifications upon the $m$ processors. In Lemma \ref{worst-case scenario} we defined the worst-case scenario. 

\subsection{Validity test for $\SUMMSO$}

Thanks to Lemma \ref{rem-deadlines}, the deadline of every rem-job is always met while using $\SUMMSO$ during the transition phases. Thereby, $\SUMMSO$ is valid for a given multi-mode real-time system if, for every mode change, the maximal transition delay which could be produced by the rem-jobs is not larger than the minimal enablement deadline of the new-mode tasks. Thanks to Lemma \ref{worst-case scenario}, the transition delay which is actually equal to the completion time of all the rem-jobs is maximal when they are released simultaneously, with execution requirements equal to their WCETs. This leads to the following validity test.

\begin{validity test}
For any multi-mode real-time system $\tau$ and any uniform platform $\pi$, protocol $\SUMMSO$ is valid provided $\forall i,j$ with $i \neq j$:
\[ \upms(\tau^i, \pi) \le \min_{1 \le k \le n_j} \left(\mathcal{D}_k^j(M^i)\right) \]
\noindent where $\upms(\tau^i, \pi)$ is an upper-bound on the makespan and is defined in Sections~\ref{sec:proofsFTP} and~\ref{sec:proofsFJP} for both $\FTP$ and $\FJP$ schedulers, respectively. 
\end{validity test}

\subsection{Validity test for $\AUMMSO$}

The main idea to know whether $\AUMMSO$ is valid for a given system $\tau$ and platform $\pi$ is to simulate Algorithm~\ref{AUM-MSO protocol} for every possible mode transition, while considering the worst-case scenario for each one -- the scenario where the new-mode tasks are enabled as late as possible. From our definition of protocol $\AUMMSO$, we know that every instant at which some new-mode tasks are enabled corresponds to an instant at which a processor has no more rem-job to execute, i.e., the ``step instants'' $\minstep_j$ of the staircase depicted in Figure~\ref{fig:staircase1}. Consequently, the largest instants at which new-mode tasks could be enabled are the upper-bounds $\step_j$ on the instants $\minstep_j$ and can be determined by considering only the schedule of the rem-jobs. These upper-bounds are defined for both $\FTP$ and $\FJP$ schedulers in Sections~\ref{sec:proofsFTP} and~\ref{sec:proofsFJP}, respectively. Notice that it results from this notation that $\step_m = \upms(\tau^i, \pi)$ and the validity test for $\SUMMSO$ can be rewritten as $\step_m \le \min_{1 \le k \le n_j} (\mathcal{D}_k^j(M^i))$ $\forall i,j$ with $i \neq j$. 

The details for the validity of the transition protocol AUM-SMO are provided by Algorithm \ref{Validity test}. The correctness of our validity algorithm derives directly from the fact that every instant at which a new-mode task is enabled in Algorithm \ref{Validity test} is as large as possible. Notice that, since our validity test considers only the worst-case scenario for every mode transition, it could be some time-instants (in any mode transition) during the actual execution of the system at which the set of already enabled tasks benefits from a larger number of available processors than in the worst-case scenario. However, we prove in Lemma~\ref{no-anomalies} that it does not jeopardize the system schedulability.

\begin{lemma}\label{no-anomalies}
Any predictable and work-conserving scheduler that is able to schedule a task set $\tau$ upon a uniform platform $\pi = [s_1, \ldots, s_m]$ is also able to schedule $\tau$ upon any platform $\pi^{*}$ such that $\pi \subseteq \pi^{*}$.
\end{lemma}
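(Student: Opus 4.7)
The plan is to establish monotonicity of completion times with respect to platform augmentation: running the same scheduler on $\pi^{*}$ produces a schedule in which every job of $\tau$ completes no later than in the schedule on $\pi$. Since all deadlines are met in the latter by hypothesis, this suffices to establish schedulability on $\pi^{*}$.

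First, by induction on the number of processors of $\pi^{*}$ that do not belong to $\pi$, I would reduce to the case where $\pi^{*}$ is obtained from $\pi$ by adding a single processor of some speed $s^{*}$, inserted at its appropriate rank so that $\pi^{*}$ remains sorted by non-decreasing speed. The base case $\pi^{*} = \pi$ is immediate, and the inductive step yields the general claim.

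Let $S$ and $S^{*}$ denote the schedules produced by the scheduler when running $\tau$ on $\pi$ and on $\pi^{*}$, respectively. I would couple $S$ and $S^{*}$ event by event (releases, completions, and migrations) and maintain the following invariant: at every event time $t$ and for every priority rank $k$, the aggregate remaining work of the top-$k$ highest priority active ready jobs in $S^{*}$ is no more than in $S$. Between two consecutive events, work-conservation (Definition~\ref{wc_sched}) ensures that in each schedule the top-$k$ jobs execute on the $k$ fastest processors of the underlying platform; because the $k$ fastest processors of $\pi^{*}$ collectively have speed at least that of the $k$ fastest of $\pi$, the invariant is preserved. At each event, predictability (Lemma~\ref{predict}) rules out the anomaly in which an earlier completion in $S^{*}$ could delay some other job.

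The main obstacle is that as soon as some job completes earlier in $S^{*}$ than in $S$, the two schedules drift apart and the active ready sets cease to coincide, making the correspondence between priority ranks in $S$ and $S^{*}$ nontrivial. I plan to handle this drift by reinterpreting the additional service provided by $s^{*}$ in $S^{*}$ as a reduction of the residual execution requirement of the affected job, and then invoking Lemma~\ref{predict} on $\pi$: since the scheduler is predictable, reducing execution requirements cannot delay any job, so the coupling persists across events. Once monotonicity of completion times is established, every job meets its deadline in $S^{*}$ because it already met it in $S$, which concludes the proof.
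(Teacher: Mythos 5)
Your proposal is correct and rests on the same key idea as the paper's own proof: after reducing to the case of a single added processor, reinterpret the service delivered by that extra processor as a reduction of the jobs' execution requirements and invoke predictability (Lemma~\ref{predict}) on the original platform $\pi$. The paper reaches the conclusion by contradiction on a single deadline miss rather than by proving full monotonicity of completion times, which lets it dispense entirely with your event-by-event coupling invariant, but the decisive mechanism is identical.
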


\begin{proof}
The proof is made by contradiction. To do so, it is sufficient to show the Lemma for $\pi^{*} = [s_1, \ldots, s_m, s_{*}]$ where $s_{*} \ge s_m$. Suppose there exists a task set $\tau$ that is schedulable with a predictable and work-conserving scheduler ${\cal S}$ upon $\pi$, but not upon $\pi^{*} \supseteq \pi$. Consider the schedule of a particular instance ${\cal I}$ of $\tau$ upon $\pi^{*}$ leading to a deadline miss, and let ${\cal I^{*}}$ be another instance of $\tau$ derived from ${\cal I}$ reducing each job requirement by the amount of time each job executes upon the sub-platform $\pi^{*} \backslash \pi$, i.e., upon $\pi_{*}$. Since the scheduler is work-conserving, the schedule of ${\cal I}$ by ${\cal S}$ upon the processors in common with $\pi$ is the same as the one that would be produced by ${\cal S}$ for ${\cal I^{*}}$ upon platform $\pi$. Since a deadline is missed in the schedule of ${\cal I}$ upon $\pi^{*}$, then a deadline is missed also in the schedule of ${\cal I^{*}}$ upon $\pi$. But since the scheduler is predictable, a deadline would be missed on $\pi$ even with the more demanding instance ${\cal I}$, leading to a contradiction. The lemma follows. \fin
\end{proof}

\begin{algorithm}
%\linesnumbered
\footnotesize
\KwIn{A multi-mode hard real-time system $\tau = \left\{ \tau^1, \tau^2, \ldots, \tau^x \right\}$}
\KwOut{a Validity Test of the transition protocol}
\Begin{
	\ForAll{$i, j \in [1, x]$ such as $i \neq j$}{
		$\tau^{\disabled} \leftarrow \tau^j$ \; 
		$\tau^{\enabled} \leftarrow \emptyset$ \;
		Sort $\tau^{\disabled}$ by increasing enablement deadlines \;
		
		\For{($k=1; k \le m; k++$)}{
			\ForAll{$\tau^j_r \in \tau^{\disabled}$}{
				\lIf{$\mathcal{D}_r^j(M^i) < \step_k$}{
					{\bf return} False \;
				}
				$\tau^{\operatorname{temp}} \leftarrow \enabled(\tau^j, t) \cup \{\tau^j_r\}$ \;
				$\pi^{\operatorname{new}} \leftarrow \pi \backslash \pi^{S^i}(\tau^i)$ \;
				\If{($\CPU(\pi^{\operatorname{new}}, S^{j}, \tau^{\operatorname{temp}}) \neq 0)$}{
					$\tau^{\enabled} \leftarrow \tau^{\enabled} \cup \{\tau_r\}$ \;
					$\tau^{\disabled} \leftarrow \tau^{\disabled} \backslash \{\tau_r\}$ \;
				}
			}
		}
	}
	{\bf return} True \;
}
\caption{Validity Test for $\AUMMSO$}
\label{Validity test}
\end{algorithm}

In the next sections, we determine the upper-bounds $\step_j$, $\forall j = 1, 2, \ldots, m$ for both $\FTP$ and $\FJP$ schedulers.
%the two cases where the jobs priorities are known and unknown beforehand, that is, for 

\section{Determination of the upper-bounds $\step_j$}
\label{sec:proofs}

\subsection{Upper-bounds for $\FTP$ schedulers}
\label{sec:proofsFTP}

We recall that for $\FTP$ schedulers job priorities are known beforehand. 

\begin{definition}[$t_{j}^{i}$]\label{def:tij}
The time-instant $t_{j}^{i}$ denotes the earliest instant in the schedule of the $i$ highest priority jobs $J_{1}, \ldots, J_{i}$ where at least $j$ processors are idle (the processors $\pi_{1}, \ldots, \pi_{j}$, since we consider work-conserving schedulers).
\end{definition}

In Theorem~\ref{theo3} we provide the exact values of $t_{j}^{i}$ ($\forall \: i \in \{1, 2, \ldots, n\}$ and $\forall \: j \in \{1, 2, \ldots, m\}$) when each job executes for its WCET. As such we provide the exact value of $t_{m}^{n}$ which corresponds to the exact makespan for the scheduling of $J = \{J_{1},\ldots,J_{n}\}$ upon the platform $\pi = [s_{1}, \ldots, s_{m}]$.

\begin{theorem}\label{theo3}
Let $J = \{J_1, J_2, \ldots, J_n\}$ be a set of $n$ jobs, all released at time $t = 0$, with computation requirements $C_1, C_2, \ldots, C_n$, respectively. Suppose that these jobs are scheduled according to a work-conserving $\FTP$ scheduler. Suppose that $J$ is ordered in decreasing-priority order then $t_{i}^{j}$ is inductively defined as follows:
 (initialization) $t_{j}^{0} = 0, \forall j \leq m$ and $t_{m+1}^{i} = \infty, \forall i$, (iteration) 
\begin{equation*}
t_{j}^{i} = 
\begin{cases}
t_{j}^{i-1} \text{\ \ \ \ \ \ \ \ \ \ \ \ if } t_{j}^{i-1} = t_{j+1}^{i-1} \quad \\ 
t_{j+1}^{i-1} \text{\ \ \ \ \ \ \ \ \ \ \ \ else if } C_{i} \geq \sum_{\ell=1}^{j} (t_{\ell+1}^{i-1} - t_{\ell}^{i-1})\cdot s_{\ell} \quad \\ 
t_{j}^{i-1} + \ds\frac{1}{s_j} \cdot \left(C_{i} - \ds\sum_{\ell=1}^{j-1} (t_{\ell+1}^{i-1} - t_{\ell}^{i-1})\cdot s_{\ell}\right) \quad \text{otherwise}
\end{cases}
\end{equation*}
\end{theorem}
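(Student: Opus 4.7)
My plan is to proceed by induction on $i$. The central structural observation is that under any work-conserving $\FTP$ scheduler, the lowest-priority job $J_i$ cannot perturb the schedule of $J_1,\ldots,J_{i-1}$; consequently the staircase $t_1^{i-1}\leq t_2^{i-1}\leq \cdots$ delivered by Lemma~\ref{lem:step} is completely fixed before $J_i$ is considered, and the task reduces to tracking how $J_i$ runs on top of this fixed schedule. The base case $i=0$ is immediate: with no jobs present, every processor is idle from time~$0$, matching $t_j^0=0$ for $j\le m$, and the convention $t_{m+1}^i=\infty$ just encodes the impossibility of having more than $m$ idle processors on an $m$-processor platform.

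For the inductive step, I first characterise how $J_i$ migrates. By Definition~\ref{wc_sched} (higher-priority jobs run on faster processors and processors are never idled unnecessarily), combined with Lemma~\ref{lem:step}, $J_i$ occupies processor $\pi_\ell$ exactly during the interval $[t_\ell^{i-1},t_{\ell+1}^{i-1})$ -- an interval that is empty whenever $t_\ell^{i-1}=t_{\ell+1}^{i-1}$, corresponding to two or more higher-priority jobs completing simultaneously. Summing the contributions across these intervals, the cumulative work performed by $J_i$ up to any time $T\in[t_\ell^{i-1},t_{\ell+1}^{i-1})$ is
\[
W(T)\;=\;\sum_{k=1}^{\ell-1}(t_{k+1}^{i-1}-t_k^{i-1})\,s_k\;+\;(T-t_\ell^{i-1})\,s_\ell,
\]
and $J_i$'s completion time is the unique $T^\star$ at which $W(T^\star)=C_i$.

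With this in hand, the three branches of the claimed formula follow from a case split on how $J_i$ interacts with the staircase at level $j$. For the first branch, $t_j^{i-1}=t_{j+1}^{i-1}$ means the staircase step at level $j$ is degenerate; two higher-priority processors free up simultaneously and $J_i$ can occupy only one of them, so at least $j$ processors are already idle at $t_j^{i-1}$. For the second branch, the hypothesis $C_i\ge \sum_{\ell=1}^{j}(t_{\ell+1}^{i-1}-t_\ell^{i-1})\,s_\ell$ is exactly $W(t_{j+1}^{i-1})\le C_i$, so $J_i$ is still running at $t_{j+1}^{i-1}$; at that instant the $(i-1)$-job schedule first exposes $j+1$ idle processors, $J_i$ migrates onto the newly released one, and exactly $j$ processors remain idle, giving $t_j^i=t_{j+1}^{i-1}$. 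In the third, remaining branch, $J_i$ completes while executing on $\pi_j$ during $(t_j^{i-1},t_{j+1}^{i-1}]$; solving $W(T^\star)=C_i$ on that interval produces the displayed closed-form expression.

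The main obstacle is the careful bookkeeping of $J_i$'s migration instants, in particular verifying that the three branches are exhaustive and that the boundary situations (where the staircase is locally flat, or where $J_i$ completes exactly at a step $t_\ell^{i-1}$) are absorbed correctly by the conventions $t_{m+1}^i=\infty$ and by the empty summation when $j=1$. A mildly subtle point is that for $\ell\le m-i+1$ one has $t_\ell^{i-1}=t_{\ell+1}^{i-1}=0$, so the corresponding summands vanish automatically; this is what spares the statement from a separate treatment of the initial placement of $J_i$ on $\pi_{m-i+1}$ rather than on $\pi_1$ when $i\le m$.
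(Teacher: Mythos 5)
Your proof follows essentially the same route as the paper's: an induction on $i$ in which the lowest-priority job $J_i$ is laid into the idle ``staircase'' left by $J_1,\ldots,J_{i-1}$ (the paper's grey areas), followed by the same three-way case split on whether the step at level $j$ is degenerate, whether $C_i$ exhausts the idle area up to level $j$, or whether $J_i$ completes while on $\pi_j$. Your write-up is actually more explicit than the paper's terse argument (the cumulative-work function $W(T)$ and the boundary discussion are left implicit there); the only nit is an off-by-one in your closing remark --- the vacuous intervals $t_\ell^{i-1}=t_{\ell+1}^{i-1}=0$ occur for $\ell\le m-i$, with $J_i$ initially placed on $\pi_{m-i+1}$ --- which does not affect the argument.
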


\begin{proof} 
Initially, the $m$ processors are idle, consequently $t_{j}^{0} = 0, \forall j \leq m$. We find convenient to define $t_{m+1}^{i} \equals \infty, \forall i$, which means that we have at most $m$ processors available.

Now we will prove the correctness of the value of $t_{j}^{i}$ ($\forall j \leq m$) assuming that $t_{j}^{i-1}$ are defined ($\forall j \leq m+1$). The time-instants $t_{j}^{i-1}$ define a staircase as illustrated in Figure~\ref{fig:staircase} for the scheduling of jobs $J_{1}, \ldots, J_{i-1}$. As such, job $J_{i}$ can only progress during the grey areas. Consequently we have to distinguish between two cases:
\begin{enumerate}
\item $t_{j}^{i-1} = t_{j+1}^{i-1}$, i.e., at least one faster processor becomes available at time $t_{j}^{i-1}$ (the grey area on processor $\pi_{j}$ is void in that case), the job $J_{i}$ will be executed (if not completed) upon a faster processor, consequently the first time-instant where at least $j$ processors are idle remains unchanged and $t_{j}^{i} = t_{j}^{i-1}$.

\item Otherwise, $J_{i}$ will be scheduled upon processor $\pi_{j}$ while no faster processors become available or $J_{i}$ completes. Remark that $\sum_{\ell=1}^{j} (t_{\ell+1}^{i-1} - t_{\ell}^{i-1})\cdot s_{\ell}$ corresponds to the grey area on processors $\pi_{1}, \dots, \pi_{j}$. Hence the two subsequent sub-cases follow.
\end{enumerate}
\fin
\end{proof}

\begin{figure}[!h]
\centering
\includegraphics[scale=0.3]{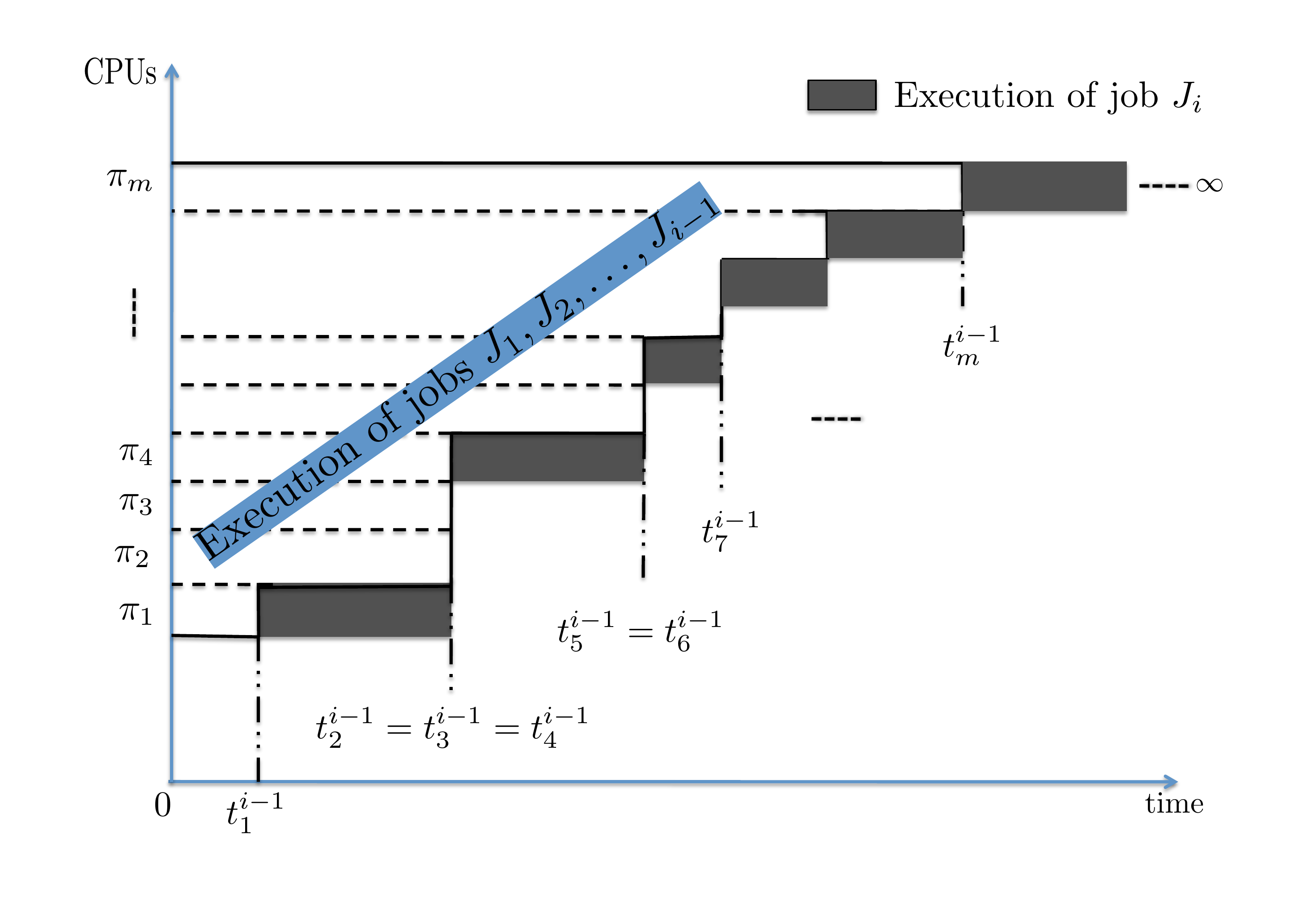}
\caption{Staircase defined by the $t_{j}^{i-1}$}
\label{fig:staircase}
\end{figure}

\begin{corollary}
Each instant $t^n_{j}$ ($\forall j = 1, 2, \ldots, m$) defined in Definition~\ref{def:tij} is an upper-bound on the ``step-instant'' $\minstep_j$ defined in Lemma~\ref{lem:step}, i.e., $\minstep_j \leq t^n_j$ ($\forall j = 1, 2, \ldots, m$). Therefore, the instants $t^n_{j}$ ($\forall j = 1, 2, \ldots, m$) computed in Theorem~\ref{theo3} can be used as the instants $\step_j$ in the validity test provided by Algorithm~\ref{Validity test}.
\end{corollary}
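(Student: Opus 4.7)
The plan is to argue that Theorem~\ref{theo3} computes the step-instant exactly under the worst-case scenario in which every job consumes its full WCET, and then to invoke the predictability of the scheduler to conclude that any actual schedule -- in which jobs may execute for less than their declared computation requirement -- can only reach its $j$-th step-instant earlier. This directly yields $\minstep_j \le t_j^n$.

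More concretely, I would first observe that the inductive construction in Theorem~\ref{theo3} tracks, for the first $i$ priority-ordered jobs, the earliest time at which at least $j$ processors are idle assuming each considered job executes for exactly $C_i$ units. Taking $i = n$ gives $t_j^n$ as the \emph{exact} value of the $j$-th step-instant when every job in $J$ runs for its full WCET under the $\FTP$ scheduler. Since all jobs in $J$ are synchronous (released at $t = 0$) and the scheduler is work-conserving, the number of active jobs is non-increasing with time, so the number of idle processors is non-decreasing; in particular, the earliest instant at which at least $j$ processors are idle coincides with a job-completion event. Next, I would appeal to Lemma~\ref{predict}: every work-conserving $\FJP$ scheduler -- and hence every work-conserving $\FTP$ scheduler, since $\FTP$ is a special case of $\FJP$ -- is predictable on uniform multiprocessor platforms. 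Reducing the execution requirement of each job from $C_i$ to any actual value $c_i \le C_i$ can therefore only advance (or leave unchanged) every job completion time. Composing with the previous observation, the step-instants computed from those completion times can likewise only decrease, yielding $\minstep_j \le t_j^n$ for all $j$. The second assertion of the corollary then follows trivially: since $\step_j$ was introduced in Algorithm~\ref{Validity test} as any upper bound on $\minstep_j$, the quantities $t_j^n$ from Theorem~\ref{theo3} are admissible substitutes.

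The only subtle step in the plan is the bridge between ``earliest time $j$ processors are idle'' and ``a specific job-completion event''. This identification requires that no new work enters the system after $t = 0$, and that the scheduler is work-conserving with the migration discipline of Definition~\ref{wc_sched}, so that every processor idled stays idle thereafter (the slowest one being idled first, as depicted in Figure~\ref{fig:staircase1}). Both facts are built into the hypotheses of Lemma~\ref{lem:step}, so the argument goes through cleanly; everything else is a direct invocation of Theorem~\ref{theo3} and Lemma~\ref{predict}.
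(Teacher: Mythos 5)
Your proposal is correct and follows essentially the same route as the paper: identify $t^n_j$ with the step-instants of the schedule in which every job runs for its full WCET, then invoke predictability (Lemma~\ref{predict}) together with the work-conserving discipline to conclude that the actual step-instants can only occur earlier. The paper phrases this as a proof by contradiction (counting running jobs on $[t^n_j,\minstep_j)$ to exhibit a job that completes later with a smaller requirement), whereas you argue directly via the monotonicity of completion times, but the underlying idea is identical.
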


\begin{proof}
From Theorem~\ref{theo3}, the instants $t^n_{j}$ ($\forall j = 1, 2, \ldots, m$) are derived from the schedule in which every job executes for its WCET (we denote this ``worst-case'' schedule $S^{\operatorname{wc}}$ hereafter). Now, suppose by contradiction that in the actual schedule $S^{\operatorname{act}}$ (during the system execution), there exist $j \in \left[ 1, m \right]$ such that $\minstep_j$ in $S^{\operatorname{act}}$ is strictly larger than $t^n_j$ from $S^{\operatorname{wc}}$. This implies that, within the time interval $[ t^n_j, \minstep_j [$, there are \emph{at least} $(m-j+1)$ running jobs in $S^{\operatorname{act}}$ while there are \emph{at most} $(m-j)$ running jobs in $S^{\operatorname{wc}}$. Therefore, within $[ t^n_j, \minstep_j [$, at least one job (say $J_{\ell}$) is not completed yet in $S^{\operatorname{act}}$ whereas it is already completed in $S^{\operatorname{wc}}$. But since we consider only work-conserving schedulers and since in $S^{\operatorname{act}}$ the execution requirement of $J_{\ell}$ can only be lower than or equal to that in $S^{\operatorname{wc}}$, the fact that $J_{\ell}$ completes later in $S^{\operatorname{act}}$ than in $S^{\operatorname{wc}}$ leads to a contradiction with the predictability. \fin
\end{proof}

%From Theorem~\ref{theo3}, the instants $t^n_{j}$ ($\forall j = 1, 2, \ldots, m$) are derived from the schedule in which every job executes for its WCET (we denote this ``worst-case'' schedule $S^{\operatorname{wc}}$ hereafter). Now, suppose by contradiction that in the actual schedule (during the system execution), there exist $j \in \left[ 1, m \right]$ such that $\minstep_j > t^n_j$. Therefore, the number of jobs executed within the time interval $[ \minstep_j, t^n_j [$ is larger in the actual schedule than in $S^{\operatorname{wc}}$. This means that at least one job is not completed yet in the actual schedule whereas it is already completed in $S^{\operatorname{wc}}$. But as we consider only work-conserving schedulers, this leads to a contradiction with the predictability. 

%%%------------

\subsection{Upper-bounds for $\FJP$ schedulers}
\label{sec:proofsFJP}

%In Theorem~\ref{theo3} we provide the {\em exact makespan} for the worst-case scenario, yielding ``exact instant'' at which it is safe to release the new-mode tasks, when a work-conserving $\FTP$ scheduler is used in each operation mode. In Theorem~\ref{makespan upperbound} we provide an {\em upper bound on the makespan} for the worst-case scenario, yielding ``a time instant'' at which it is safe to release the new-mode tasks, when a  work-conserving $\FJP$ scheduler is used in each operation mode. As in \cite{Nelis1}, we use this bound to provide a {\em sufficient schedulability condition} that indicates, a priori, if all the enablement deadlines will be met during any mode change. 

% JG work in progress

We recall that for $\FJP$ schedulers job priorities are unknown beforehand. We assume without any loss of generality that we always have $m \le n$ as the problem in the case where $m \ge n$ reduces to the same problem upon the $n$ fastest processors. In Lemma~\ref{earliest tidle}, we first determine a lower bound on the smallest instant in the schedule of $J$ where at least $j$ CPUs (with $j = 1, 2, \ldots, m$) are idle. Then in Theorem~\ref{latest tidle} we determine an upper bound on the maximum makespan that could be produced by $J$, this is given by $\step_m$.

\begin{lemma}\label{earliest tidle}
Let $J = \{J_1, J_2, \ldots, J_n\}$ be a set of $n$ jobs with computation requirements $C_1, \ldots, C_n$, respectively, such that $C_1 \le \cdots \le C_n$. A lower bound $\hat{L}_j$ on the smallest instant $\minstep_j$ at which at least $j$ CPUs (with $j = 1, 2, \ldots, m$) are idle is given by
\begin{equation}
\hat{L}_j \equals \ds\frac{1}{S_{\pi}} \cdot \ds\sum_{k=1}^{n-m+j} C_k \:\: \mbox{where} \:\: S_{\pi} \equals \ds\sum_{i=1}^{m} s_i
\end{equation}
\end{lemma}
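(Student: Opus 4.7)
The plan is to establish $\hat{L}_j$ as a lower bound on $\minstep_j$ by a pigeonhole/work-conservation argument, contrasting the amount of work a uniform platform can produce by time $\minstep_j$ with a lower bound on the total computation that \emph{must} have been delivered by that instant.

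First I would fix $j \in \{1,\ldots,m\}$ and observe that, at time $\minstep_j$, at least $j$ processors are idle, so at most $m-j$ processors are running a job. Since job parallelism is forbidden (cf.\ Section~\ref{sec:modetrans_specs}), each busy processor carries at most one active job, hence at most $m-j$ jobs of $J$ are still incomplete at $\minstep_j$. Consequently, at least $n - (m-j) = n-m+j$ jobs of $J$ have been fully completed in $[0,\minstep_j)$. Let $I \subseteq \{1,\ldots,n\}$ be the index set of these completed jobs, so $|I| \ge n-m+j$.

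Second, I would compute the total work performed by the platform in $[0,\minstep_j)$ in two different ways. On the one hand, each processor $\pi_i$ contributes at most $s_i \cdot \minstep_j$ units of execution during that interval, so the total work performed is at most $\sum_{i=1}^{m} s_i \cdot \minstep_j = S_\pi \cdot \minstep_j$. On the other hand, the work performed is at least the sum of execution requirements of the completed jobs, namely $\sum_{k \in I} C_k$. Because $C_1 \le \cdots \le C_n$ and $|I| \ge n-m+j$, any subset of $J$ of that cardinality has total requirement no smaller than the sum of the $n-m+j$ smallest values, i.e.\ $\sum_{k \in I} C_k \ge \sum_{k=1}^{n-m+j} C_k$.

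Chaining the two inequalities yields $S_\pi \cdot \minstep_j \ge \sum_{k=1}^{n-m+j} C_k$, from which dividing by $S_\pi > 0$ gives $\minstep_j \ge \hat{L}_j$, as claimed. The main obstacle — really the only subtlety — is the correct accounting in the first step: one must carefully invoke the \emph{no parallelism} assumption to equate ``at most $m-j$ busy processors'' with ``at most $m-j$ incomplete jobs'', since without it a single job could be spread over several processors and the count $n-m+j$ of completed jobs would no longer be guaranteed. Everything else is a straightforward work-conservation argument and a minimality-of-prefix-sum observation on the sorted $C_k$.
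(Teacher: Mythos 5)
Your proof is correct and follows essentially the same route as the paper's: both arguments count at least $n-m+j$ completed jobs from the presence of $j$ idle processors and then lower-bound the completion instant of any such subset by its total requirement divided by $S_{\pi}$, minimized by taking the $n-m+j$ smallest $C_k$ (you merely make the work-capacity accounting $S_{\pi}\cdot\minstep_j \ge \sum_{k\in I}C_k$ explicit where the paper states it directly). The one nuance worth flagging is that passing from ``at most $m-j$ busy processors'' to ``at most $m-j$ incomplete jobs'' needs the work-conserving property of Definition~\ref{wc_sched} (an incomplete job cannot be ready-but-waiting while a processor idles) in addition to the no-parallelism assumption you invoke.
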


\begin{proof}
For the schedule ${\cal S}$ obtained from a work-conserving $\FJP$ scheduler, let $\minstep_j$ denote the smallest instant in ${\cal S}$ at which at least $j$ processors are idle. According to Definition~\ref{wc_sched}, at most $(m-j)$ jobs are not completed at time $\minstep_j$, meaning that \emph{at least} $(n - m + j)$ are already completed. If $J'$ denotes any subset of $r \geq (n - m + j)$ jobs of $J$, then a lower bound $t$ on the instant at which the $r$ jobs of $J'$ are completed is given by $t \equals \ds\frac{1}{S_{\pi}} \cdot \ds\sum_{J_k \in J'} C_k$. Obviously, since $C_1 \leq C_2 \leq \ldots \leq C_n$, $t$ is minimal for $J' = \{ J_1, J_2, \ldots, J_{n - m + j}\}$, i.e., $t =  \hat{L}_j$. The lemma follows. \fin
\end{proof}

\begin{theorem}\label{latest tidle}
Let $J = \{J_1, J_2, \ldots, J_n\}$ be a set of $n$ jobs with computation requirements $C_1, \ldots, C_n$, respectively, such that $C_1 \le \cdots \le C_n$. An {\em upper bound} $\step_j$ on the smallest instant at which at least $j$ CPUs (with $j = 1, 2, \ldots, m$) are idle is given by
\begin{equation}
\step_j \equals \frac{\sum_{k=1}^{n} C_k - \sum_{k=1}^{j-1} \hat{L}_k \cdot s_k}{\sum_{k=j}^m s_k}
\end{equation}
\end{theorem}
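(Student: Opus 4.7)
The plan is to bound $\minstep_j$ by accounting for the total work completed on the platform by time $\minstep_j$ and relating that quantity to the total execution requirement $\sum_{k=1}^{n} C_k$.

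First, I would characterize the processor utilization state at time $\minstep_j$. By the work-conserving condition (Definition~\ref{wc_sched}) together with Lemma~\ref{lem:step}, the $j$ idle processors at time $\minstep_j$ must be the $j$ slowest ones, and they became idle at the instants $\minstep_1 \le \minstep_2 \le \cdots \le \minstep_j$ (respectively for $\pi_1, \pi_2, \ldots, \pi_j$). Crucially, since the ``staircase'' is non-decreasing, once processor $\pi_k$ ($k<j$) becomes idle at $\minstep_k$ it remains idle at least until $\minstep_j$ (otherwise a faster processor would be idled while $\pi_k$ is working, contradicting the work-conserving rule). Meanwhile processors $\pi_j, \pi_{j+1}, \ldots, \pi_m$ are busy throughout the whole interval $[0, \minstep_j)$.

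Second, I would compute the total amount of work $W$ executed on the platform during $[0, \minstep_j)$:
\begin{equation*}
W \;=\; \sum_{k=1}^{j-1} s_k \cdot \minstep_k \;+\; \minstep_j \cdot \sum_{k=j}^{m} s_k.
\end{equation*}
Since no more than $\sum_{k=1}^{n} C_k$ units of work can ever be executed (the jobs are released at time $0$ and each has requirement $C_k$), we obtain $W \le \sum_{k=1}^{n} C_k$. Isolating $\minstep_j$:
\begin{equation*}
\minstep_j \;\le\; \frac{\sum_{k=1}^{n} C_k \;-\; \sum_{k=1}^{j-1} s_k \cdot \minstep_k}{\sum_{k=j}^{m} s_k}.
\end{equation*}

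Third, I would substitute the lower bounds from Lemma~\ref{earliest tidle}, namely $\minstep_k \ge \hat{L}_k$ for $k=1,\ldots,j-1$. Because each $\minstep_k$ appears with a negative sign in the numerator, replacing $\minstep_k$ by its lower bound $\hat{L}_k$ can only enlarge (or preserve) the right-hand side, yielding exactly
\begin{equation*}
\minstep_j \;\le\; \frac{\sum_{k=1}^{n} C_k \;-\; \sum_{k=1}^{j-1} \hat{L}_k \cdot s_k}{\sum_{k=j}^{m} s_k} \;=\; \step_j,
\end{equation*}
which is the claimed bound.

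The main obstacle is the first step: justifying rigorously that processors $\pi_j, \ldots, \pi_m$ remain busy throughout $[0, \minstep_j)$ and that the idle intervals of $\pi_1, \ldots, \pi_{j-1}$ begin exactly at $\minstep_1, \ldots, \minstep_{j-1}$ and persist until $\minstep_j$. This must be argued directly from the three bullets of Definition~\ref{wc_sched} combined with Lemma~\ref{lem:step}; once the accounting identity for $W$ is established, the remaining calculation reduces to a monotonicity substitution and is routine.
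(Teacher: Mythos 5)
Your proposal is correct and follows essentially the same route as the paper's proof: both account for the work executed on the staircase (busy processors $\pi_j,\ldots,\pi_m$ up to $\minstep_j$, plus $s_k\cdot\minstep_k$ on each earlier-idled $\pi_k$), compare it with $\sum_{k=1}^{n}C_k$, and then replace each $\minstep_k$ ($k<j$) by its lower bound $\hat{L}_k$ from Lemma~\ref{earliest tidle}. The only difference is presentational -- you isolate $\minstep_j$ directly while the paper argues by contradiction from the assumption $\minstep_k > \step_k$ -- and the "obstacle" you flag (that $\pi_1,\ldots,\pi_{j-1}$ stay idle and $\pi_j,\ldots,\pi_m$ stay busy on $[0,\minstep_j)$) is handled exactly as you suggest, via Definition~\ref{wc_sched}, Lemma~\ref{lem:step}, and the fact that all jobs are released at time $0$.
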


\begin{proof}
Consider the following notations: ($i$) ${\cal S}$ denotes the schedule of the $n$ jobs obtained from a work-conserving $\FJP$ scheduler; ($ii$) $\minstep_j$ (with $j = 1, 2, \ldots, m$) denotes the smallest instant in ${\cal S}$ at which at least $j$ processors are idle, and ($iii$) $W_j$ denotes the amount of work executed on CPU $\pi_j$ within $[ 0, \minstep_j ]$, i.e., $W_j \equals \minstep_j \cdot s_j$. Let $k$ be any integer in $\left[ 1, m \right]$ and suppose by contradiction that $\minstep_k > \step_k$. By definition of the $W_j$, we know that 
\begin{equation}\label{equ:workissumci}
\sum_{j = 1}^m W_j = \sum_{j = 1}^n C_j
\end{equation}
Furthermore, we know that $\sum_{j = 1}^m W_j = \sum_{j = 1}^m \minstep_j \cdot s_j = \sum_{j = 1}^{k-1} (\minstep_j \cdot s_j) + \sum_{j = k}^{m} (\minstep_j \cdot s_j)$. Since we know from Lemma~\ref{lem:step} that we have $\minstep_{j+1} \geq \minstep_j$ $\forall~j=1, 2, \ldots, m-1$, it holds that
\begin{eqnarray}
\sum_{j = 1}^m W_j & \geq & \sum_{j = 1}^{k-1} (\minstep_j \cdot s_j) + \sum_{j = k}^{m} (\minstep_k \cdot s_j) \nonumber %\\
%& \geq & \sum_{j = 1}^{k-1} (\minstep_j \cdot s_j) + \minstep_k \cdot \sum_{j = k}^{m} s_j \nonumber
\end{eqnarray}
By assumption we have $\minstep_k > \step_k$, leading to
\begin{small}
\begin{eqnarray}
\sum_{j = 1}^m W_j & > & \sum_{j = 1}^{k-1} (\minstep_j \cdot s_j) + \step_k \cdot \sum_{j = k}^{m} s_j \nonumber \\
& > & \sum_{j = 1}^{k-1} (\minstep_j \cdot s_j) + \frac{\sum_{j=1}^{n} C_j - \sum_{j=1}^{k-1} \hat{L}_j \cdot s_j}{\sum_{j=k}^m s_j} \cdot \sum_{j = k}^{m} s_j \nonumber \\
%& > & \sum_{j = 1}^{k-1} (\minstep_j \cdot s_j) + \sum_{j=1}^{n} C_j - \sum_{j=1}^{k-1} \hat{L}_j \cdot s_j \nonumber \\
& > & \sum_{j=1}^{n} C_j + \sum_{j = 1}^{k-1} \left((\minstep_j - \hat{L}_j ) \cdot s_j \right) \nonumber
\end{eqnarray}
\end{small}
Since from Lemma~\ref{earliest tidle} it holds that $\hat{L}_j \leq \minstep_j$ $\forall j=1, 2, \ldots, m$, it yields $\sum_{j = 1}^m W_j > \sum_{j=1}^{n} C_j$ leading to a contradiction with Equality~\ref{equ:workissumci}. The theorem follows. \fin
\end{proof}

%The following corollary directly follows from the definition of $\step_j$ $\forall j = 1, 2, \ldots, m$. 
%\begin{corollary}\label{makespan upperbound}
%For a given set $J$ of jobs and a given uniform platform $\pi$, an upper bound on the maximum makespan that could be produced is given by
%\begin{equation}\label{def_latest2}
%\upms(J, \pi) \equals \step_m 
%\end{equation}
%\end{corollary}

For experimental purpose, let us recall the definition of the parameter $\lambda_\pi$ \cite{SGB2001} for an $m$-processor uniform platform $\pi = [s_1, s_2, \ldots, s_m]$: $\lambda_\pi \equals \max_{j=1}^{m} \left\{\frac{\sum_{k=1}^{j-1} s_k}{s_j} \right\}$. Note that parameter $\lambda_\pi$ measures the ``degree'' by which $\pi$ differs from an identical multiprocessor platform. 

%The parameter $\lambda_\pi$ measures the ``degree'' by which $\pi$ differs from an identical multiprocessor platform. 
%$\lambda_\pi = (m-1)$ if $\pi$ is comprised of $m$ identical processors, and becomes progressively smaller as the speeds of the processors differ from each other by greater amounts. \\

\section{Experimental results}\label{Experimental results}

In this section, we report on the results of experiments conducted using the theoretical results presented in Section~\ref{sec:proofsFJP} (since the upper-bounds presented in Section~\ref{sec:proofsFTP} can be considered as exact if every job executes for its WCET).
%These experiments are performed in order to analyze the accuracy of our upper-bound $\upms(J, \pi)$ for a given set of jobs $J$ scheduled upon an $m$-processor uniform platform $\pi$. 
The considered set of jobs $J$ is composed of $10$ jobs of \emph{undetermined} priority and % with the WCETs given by: $C_1 = 3896$, $C_2 = 3964$, $C_3 = 878$, $C_4 = 1378$, $C_5 = 2228$, $C_6 = 3612$, $C_7 = 1230$, $C_8 = 1232$, $C_9 = 1668$, and $C_{10} = 4672$.
%in Table~\ref{tab:parameters}. 
%These characteristics are drawn from realistic parameters that concerns the avionic domain~\cite{IainBate:98}.
\begin{comment}
\begin{table}[!h]
\centering
\begin{tabular}{| c | c | c | c | c |}
\hline
\multicolumn{5}{| c |}{\textbf{Job WCETs}} \\
\hline
$C_1$ & $C_2$ & $C_3$ & $C_4$ & $C_5$ \\
\hline
3896 & 3964 & 878 & 1378 & 2228 \\
\hline
$C_6$ & $C_7$ & $C_8$ & $C_9$ & $C_{10}$ \\
\hline
3612 & 1230 & 1232 & 1668 & 4672 \\
\hline
\end{tabular}
\caption{WCET of the considered jobs}
\label{tab:parameters}
\end{table}
\end{comment}
the platform $\pi$ is composed of $4$ processors with computing capacities varying within $\left[1, 101 \right]$ with an increment of $10$. 

During the simulation, all possible combinations of the processors speeds are considered for the 4 CPUs. For every assignation of processors speed, we determine the corresponding parameter $\lambda_{\pi}$ as well as the error $E(J,\pi)$ (expressed in percent) of $\upms(J, \pi)$ compared to the exact value of the maximum makespan (which is determined by considering the schedules derived from every job priority assignment). Finally, the errors $E(J,\pi)$ are displayed relative to the corresponding $\lambda_{\pi}$ in Figure~\ref{fig:experiment}. 

\begin{figure}[!h]
\centering
\includegraphics[scale=0.091]{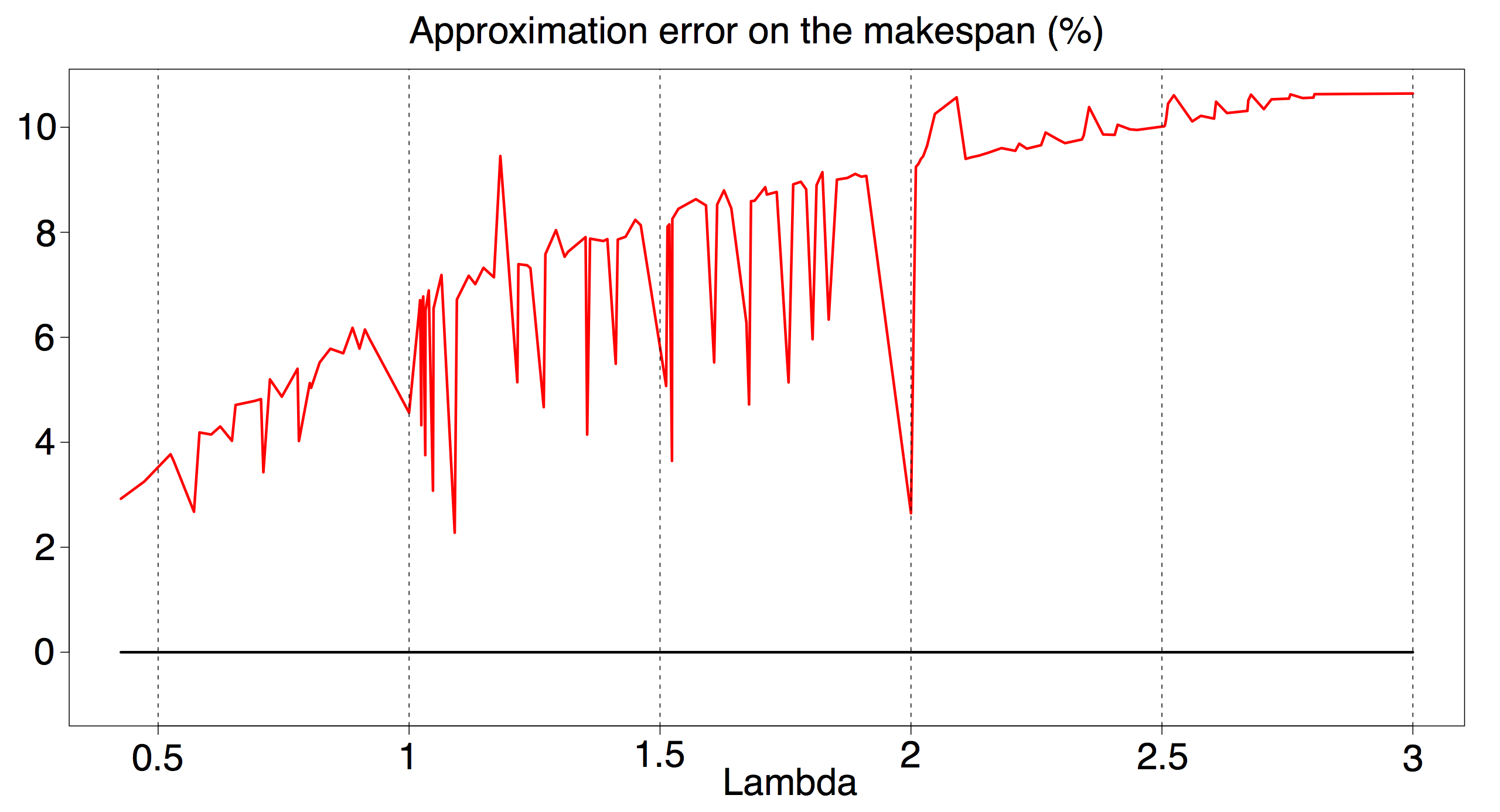}
\caption{Simulation results}
\label{fig:experiment}
\end{figure}

The most important error that we obtained is $10.64\%$ and the minimal one is $2.28\%$. The average error is $7.78\%$ with a squared distance of $2.32\%$. Hence, we believe that this is a promising path to go for more competitive bounds and for practical use.

\section{Conclusion and Future work}\label{Conclusion and future work}

In this paper, the scheduling problem of multi-mode real-time systems upon {\em uniform} multiprocessor platforms is studied. Two protocols for transitioning between every pair of operating modes of the system are specified together with their validity tests. The first proposed protocol $\SUMMSO$ is synchronous in the sense that the tasks of the old- and new-mode are not scheduled simultaneously. The second protocol $\AUMMSO$ is asynchronous in the sense that it allows old- and new-mode tasks to be scheduled together. This study led us to provide the reader with tight bounds for the length of the transient phases during mode transitions. Future work will focus on handling mode-independent tasks, i.e., tasks whose behavior is not affected by the mode changes. \\
\vspace{-0.3cm}

\paragraph{Acknowledgment.} The authors would like to thank Bernard Fortz for taking part in interesting discussions.

\bibliographystyle{acm}
\bibliography{biblio}
\end{document}